\newtheoremstyle{newdefinition}% <name>
  {3pt}% <Space above>
  {3pt}% <Space below>
  {}% <Body font>
  {1em}% <Indent amount>
  {\itshape}% <Theorem head font>
  {:}% <Punctuation after theorem head>
  {.5em}% <Space after theorem heading>
  {\thmname{#1}\thmnumber{\@ifnotempty{#1}{ }#2}%
   \thmnote{ {\the\thm@notefont(\itshape#3)}}}% <Theorem head spec (can be left empty, meaning `normal')>
\theoremstyle{newdefinition}
\newtheoremstyle{newlemma}% <name>
  {3pt}% <Space above>
  {3pt}% <Space below>
  {}% <Body font>
  {1em}% <Indent amount>
  {\itshape}% <Theorem head font>
  {:}% <Punctuation after theorem head>
  {.5em}% <Space after theorem heading>
  {\thmname{#1}\thmnumber{\@ifnotempty{#1}{ }#2}%
   \thmnote{ {\the\thm@notefont(\itshape#3)}}}% <Theorem head spec (can be left empty, meaning `normal')>
\theoremstyle{newlemma}
\newtheorem{lemma}{Lemma}
\newtheoremstyle{newtheorem}% <name>
  {3pt}% <Space above>
  {3pt}% <Space below>
  {}% <Body font>
  {1em}% <Indent amount>
  {\itshape}% <Theorem head font>
  {:}% <Punctuation after theorem head>
  {.5em}% <Space after theorem heading>
  {\thmname{#1}\thmnumber{\@ifnotempty{#1}{ }#2}%
   \thmnote{ {\the\thm@notefont(\itshape#3)}}}% <Theorem head spec (can be left empty, meaning `normal')>
\theoremstyle{newtheorem}
\newtheorem{theorem}{Theorem}
\newtheoremstyle{newproposition}% <name>
  {3pt}% <Space above>
  {3pt}% <Space below>
  {}% <Body font>
  {1em}% <Indent amount>
  {\itshape}% <Theorem head font>
  {:}% <Punctuation after theorem head>
  {.5em}% <Space after theorem heading>
  {\thmname{#1}\thmnumber{\@ifnotempty{#1}{ }#2}%
   \thmnote{ {\the\thm@notefont(\itshape#3)}}}% <Theorem head spec (can be left empty, meaning `normal')>
\theoremstyle{newproposition}
\newtheoremstyle{newproof}% <name>
  {3pt}% <Space above>
  {3pt}% <Space below>
  {}% <Body font>
  {2em}% <Indent amount>
  {\itshape}% <Theorem head font>
  {:}% <Punctuation after theorem head>
  {.5em}% <Space after theorem heading>
  {\thmname{#1}\thmnumber{\@ifnotempty{#1}{ }#2}%
   \thmnote{ {\the\thm@notefont(\itshape#3)}}}% <Theorem head spec (can be left empty, meaning `normal')>
\theoremstyle{newproof}
\begin{document}
\title{Decentralized Hypothesis Testing in \\Energy Harvesting Wireless Sensor Networks}
\author{Alla~Tarighati,
      James~Gross,~\IEEEmembership{Senior Member,~IEEE,}
	 and~Joakim~Jald{\'e}n,~\IEEEmembership{Senior Member,~IEEE}
%\thanks{A.~Tarighati was with the school of electrical engineering, KTH royal institute of technology, Stockholm 10044, Sweden. He is now with  Combient AB, Stockholm 11157, Sweden.
%
%J.~Gross~and~J.~Jald{\'e}n are with the school of electrical engineering, KTH royal institute of technology, Stockholm 10044, Sweden (e-mail: james.gross@ee.kth.se; jalden@kth.se).
%
%This work has been supported in part by the ACCESS seed project DeWiNe.}
%
}
\maketitle

\begin{abstract}
We consider the problem of decentralized hypothesis testing in a network of energy harvesting sensors, where sensors make noisy observations of a phenomenon and send quantized information about the phenomenon towards a fusion center. The fusion center makes a decision about the present hypothesis using the aggregate received data during a time interval. We explicitly consider a scenario under which the messages are sent through parallel access channels towards the fusion center. To avoid limited lifetime issues, we assume each sensor is capable of harvesting all the energy it needs for the communication from the environment. Each sensor has an energy buffer (battery) to save its harvested energy for use in other time intervals. Our key contribution is to formulate the problem of decentralized detection in a sensor network with energy harvesting devices. Our analysis is based on a queuing-theoretic model for the battery and we propose a sensor decision design method by considering long term energy management at the sensors. We show how the performance of the system changes for different battery capacities. We then numerically show how our findings can be used in the design of sensor networks with energy harvesting sensors.

\end{abstract}

\IEEEpeerreviewmaketitle

%%%%%%%%%%%%%%%%%%%%%%%%%%%%%%%%%%%
\section{Introduction}\label{sec:intro}
Distributed detection problem formulations have traditionally addressed detection in sensor networks by considering network performance measures like error probability and receiver operating characteristic \cite{Varsh96}. In these setups, spatially separated sensors make observations of the same phenomenon and send a summary of their observations towards a fusion center (FC) through rate-constrained channels. Each sensor can be viewed as a quantizer which quantizes its observation, and according to the network arrangement, sends its output either to another sensor or to the FC. In many applications, sensors send their outputs to the FC through a multiple access channel \cite{li07,Ciu12,Ciu15} or through parallel access channels, commonly known as the parallel topology \cite{Lon90,Alla14}. A survey of early works on decentralized hypothesis testing in wireless sensor networks can be found in \cite{Veer12,Cham07,Chen06}.

A large number of sensors with small batteries and limited life-time are often used in wireless sensor networks. A major limitation of these sensors is their finite lifetime. In other words, the sensors work as long as their battery last and this implies that also the network has a limited lifetime. Many solutions to increase the lifetime of battery-powered sensor nodes have been proposed, see \cite{Aky02,Bae04,Nug06} and references therein. While in all of these methods the aim is to find an energy usage strategy to maximize the lifetime of a network, the lifetime remains bounded and finite. An alternative way of dealing with this problem is to use energy harvesting devices at the sensor nodes. An energy harvesting device is capable of acquiring energy from nature or from man-made sources \cite{Sha10,Ulu15}. 

Energy harvesting technologies provide a promising future for wireless sensor networks, such as self sustainability and an effectively perpetual network lifetime which is not limited by the sensor battery lifetime \cite{Ulu15,Gun14,Sud11}. While acquiring energy from the environment makes it possible to deploy wireless sensor networks in situations which are impossible using conventional battery-powered sensors, it poses new challenges related to the management of the harvested energy. These new challenges are due to the fact that the amount of energy available at a sensor is random, since the source of energy might not be available at all times we may want to use the sensor nodes.

We address the problem of detection in networks of sensors arranged in parallel, where each sensor is an energy harvesting device. At each time $t=1,2,\ldots$ the sensors send a message towards the FC about the state of the current hypothesis $H_t$ and the FC makes a decision about the hypothesis at that time. The sensors communicate with the FC using energy asymmetric on-off keying (OOK), as a low communication rate scheme for distributed detection applications \cite{Rago96}, where a positive message can be sent at the cost of one unit of energy, and a negative message is conveyed through a non-transmission at no cost in energy \cite{Berg09,Ciu13}. It was previously shown in \cite{Li11} that OOK is the most energy efficient modulation scheme under Rayleigh fading non-coherent transmission, though we are not primarily concerned with explicitly modeling the fading channels between the sensors and the FC herein. We assume that each sensor is equipped with an internal battery and is allowed to use a long term energy conversion policy. We assume the observations at the sensors are, conditioned on the true hypothesis, independent and our goal is to design the sensors' transmission rules in such a way that the error probability at each time instance $t$, at the FC, is minimized. To this end, we use the Bhattacharyya distance between the conditional distributions at the FC input, which has been frequently used in the past as a performance measure in the design of distributed detection systems \cite{Poor77,Lon90,Alla15O}. 

Using the Bhattacharyya distance, essentially reduces the joint design of decision rules at the sensors to the design of decision rules at single sensors. Doing so may risk not capturing all the aspects of the joint design of the decision rules at peripheral sensors. However, due to the analytical tractability that follows from this choice of design rule, the Bhattacharyya distance has been frequently considered as a performance metric before in the literature \cite{Lon90}.

The novelty of our work is in the formulation of a decentralized detection problem with system costs coupled to the random behavior of the energy available at the sensors. Concretely, we will find the depletion probability at the sensor batteries, and evaluate the performance of the network for different battery capacities (buffer sizes). We will show how the problem formulation changes (compared to the unconstrained case) when we consider the energy features in the problem of designing the sensors in the network.

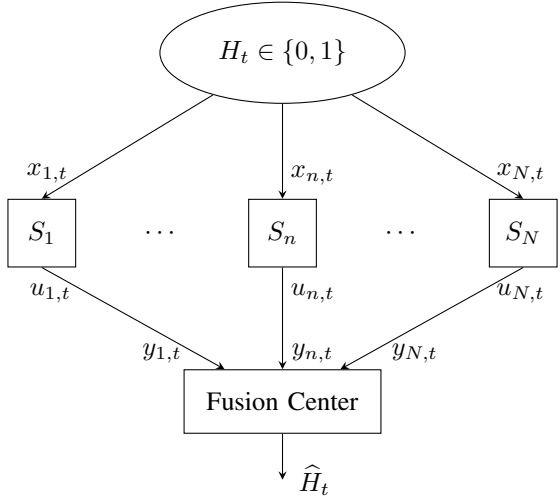
\begin{figure}
\centering
\begin{tikzpicture}[align=center,scale=0.8,>=stealth] % The first TikZ picture
% Boxes
\node (FC) at (0,0.2) [rectangle,inner sep=3mm,draw] {Fusion Center};
\node (DM1) at (-4,3) [rectangle,minimum size=0.9cm,draw] {$S_1$};
\node (DM2) at (-2,3)  {$\cdots$};
\node (DMN1) at (2,3) {$\cdots$};
\node (DMN) at (4,3) [rectangle,minimum size=0.9cm,draw] {$S_N$};
\node (DMn) at (0,3)  [rectangle,minimum size=0.9cm,draw] {$S_{n}$};
\node (PH) at (0,6) [ellipse,inner sep=3mm,draw] {$H_t\in\{0,1\}$};
% Arrows
\draw [->] (DM1.south) -- (FC) node [near start,left,inner sep=6pt] {$u_{1,t}$};
\draw [->] (DMn.south) -- (FC) node [near start,right,inner sep=3pt] {$u_{n,t}$};
\draw [->] (DMN.south) -- (FC) node [near start,right,inner sep=7pt] {$u_{N,t}$};
\draw [->] (FC.south) -- (0,-1.1) node [right,inner sep=6pt] {$\widehat{H}_t$};
\draw [->] (PH) --  (DM1.north) node [near end,left,inner sep=6pt] {$x_{1,t}$};
\draw [->] (PH) -- (DMn.north) node [near end,right,inner sep=3pt] {$x_{n,t}$};
\draw [->] (PH) -- (DMN.north) node [near end,right,inner sep=6pt] {$x_{N,t}$};

\node at (-2,1) {$y_{1,t}$};
\node at (.5,1) {$y_{n,t}$};
\node at (2.2,1) {$y_{N,t}$};
\end{tikzpicture}
\caption{Decentralized hypothesis testing scheme in a parallel network.}
\label{fig:topology}
\end{figure}

Distributed inference using energy harvesting agents has gained a lot of interest during the last years (see \cite{Nay13,Zhao13,Huang13,Nour15,Li16,Hong14,Liu14,Cha12,Med10,Alla16-EUSIPCO} and references therein). In the context of distributed estimation, Nayyar \emph{et al.} studied the structure of optimal communication scheduling \cite{Nay13}, Zhao \emph{et al.} \cite{Zhao13}, Huang \emph{et al.} \cite{Huang13}, and Nourian \emph{et al.} \cite{Nour15} studied optimal power allocation schemes, while Li \emph{et al.} \cite{Li16} studied the performance of an energy harvesting relay-aided cooperative network under fading channels. Hong \cite{Hong14} considered the case where each peripheral node sends a re-scaled version of its observation towards a FC, if the required energy for that transmission is available at the sensor; otherwise the sensor remains silent. This non-transmission also conveys information regarding the magnitude of the observed signal. Hua Liu \emph{et al.} \cite{Liu14} considered adaptive quantization is distributed estimation using a game-theoretic approach. Energy harvesting in relay systems was also considered in \cite{Cha12,Med10}, where the nodes (or relays) harvest the energy they need to transmit their amplified received messages towards a receiver. Medepally and Mehta \cite{Med10} considered a relay selection scheme where if multiple active relays are available, one of them is selected to transmit. In the context of distributed detection, however, finding optimal decision rules at the remote energy harvesting sensors is largely open. In \cite{Alla16-EUSIPCO} we studied the structure of decision rules at agents with infinite battery size and in error-free communication channels, while in this paper we shall extend the results to a more general case of erroneous communication channels.

The outline of this paper is as follows. In Section \ref{sec:system} we describe the structure of a parallel network and an energy harvesting sensor, and formulate the problem. In Section \ref{sec:design} we study the performance of an energy harvesting sensor with different battery capacities. In Section \ref{sec:results} we will illustrate our results in the design of sensor networks by presenting numerical simulations, and finally in Section \ref{sec:conclude} we conclude the paper.

%%%%%%%%%%%%%%%%%%%%%%%%%%%%%%%%
%%%%%%%%%%%%%%%%%%%%%%%%%%%%%%%%
\section{Preliminaries}\label{sec:system}
In this section we first present the system model. We then define an energy harvesting sensor and formulate the problem of designing the energy harvesting sensors in the network. Limitations of the proposed system model, and their influence on the obtained results, are further discussed in the concluding remarks in Section~\ref{sec:conclude}.
\subsection{System Model}
We consider a decentralized hypothesis testing problem where $N$ sensors $S_1,\ldots,S_N$ are arranged in parallel, according to Fig.~\ref{fig:topology}. During each time interval $t$ (defined as $[t,t+1)$) each sensor $S_n$, $n\in\{1,\ldots,N\}$, makes an observation $x_{n,t}$ from the same phenomenon and sends a message $u_{n,t}$ towards the FC. Note that $x_{n,t}$ is just assumed to be an element of an abstract space, so it could be a single scalar measurement or a vector of measurements made during time interval $t$ at sensor $S_n$. We consider the case where different observations at the sensor $S_n$, conditioned on the hypothesis, are independent and identically distributed and $X_n$ is a random variable corresponding to observations at sensor $S_n$.

At each time interval the phenomenon $H_t$ is modeled as a random variable drawn from a binary set $\{0,1\}$ with a-priori probabilities $\pi_0$ and $\pi_1$, respectively, and gives rise to conditionally independent observations $X_{n}\in\mathcal{X}_n$ with conditional distribution $f_{X_n\vert H_t}\left(x_{n,t}\vert h_t \right)$ for $h_t\in \{0,1\}$ at the sensors. In this paper, we assume that the present hypothesis $H_t$ changes over time in an i.i.d.\ fashion while it is fixed during each time interval. We also assume that the sensors are allowed to use a long term energy usage policy managed together with an internal battery. In other words, we consider the long term behavior of a sensor when its battery state operates in steady state.

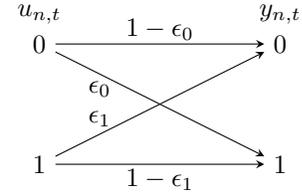
\begin{figure}
\centering
\begin{tikzpicture}[align=center,scale=0.8,>=stealth] % The first TikZ picture
% Boxes
\node (u0) at (0,2) {$0$};
\node (u1) at (0,0) {$1$};
\node (y0) at (4,2) {$0$};
\node (y1) at (4,0) {$1$};

% Arrows
\draw [->] (u0) -- (y0) node at (2,2.25) {$1-\epsilon_0$};
\draw [->] (u0) -- (y1) node at (1,1.25) {$\epsilon_0$};
\draw [->] (u1) -- (y0) node at (1,.75) {$\epsilon_1$};
\draw [->] (u1) -- (y1) node at (2,-0.25) {$1-\epsilon_1$};

\node at (0,2.5) {$u_{n,t}$};
\node at (4,2.5) {$y_{n,t}$};
\end{tikzpicture}
\caption{Binary asymmetric channels between sensors and the FC.}
\label{fig:BAC}
\end{figure}

The communication channels between the sensors and the FC are one-way links from the sensors, and there is no communication between the sensor nodes. The sensors communicate with the FC using an energy asymmetric on-off keying, where a positive message (labeled by ``$1$'') is conveyed by transmission of a message and a negative message (labeled by ``$0$'') is conveyed by a non-transmission. We model the channel between sensor $S_n$ and the FC by the most general binary discrete memoryless channel that is the so called \emph{binary asymmetric channel} (BAC) \cite{ber61}, where the $0$-to-$1$ error occurs with probability $\epsilon_0<0.5$ and the $1$-to-$0$ error occurs with probability $\epsilon_1<0.5$, as shown in Fig.~\ref{fig:BAC}, where $y_{n,t}\in\{0,1\}$ is the received message at the FC corresponding to the sensor output $u_{n,t}$. When $\epsilon_0=\epsilon_1$ the channel will be a binary symmetric channel (BSC) and when $\epsilon_0=0$, the channel will reduce to the $Z$-channel. It is important to stress that although we consider the sensor-to-FC channel to be unreliable, we also consider it to be fixed and not subject to alteration by the sensors, i.e., the optimization of the sensor decision rules does not encompass the design of the over the air transmission protocol.

The BAC is also a relevant \emph{high-level} model for the case where each sensor-to-FC channel is a fading channel and the FC (as a non-coherent receiver) uses an energy detector to detect its input $y_{n,t}$, i.e., where
\[y_{n,t}= \left\{ \begin{array}{ll}
          1 & \mathcal{E}_{n,t}> \Gamma\\
		0 & \text{Otherwise}\,, \end{array}\right.\]
where $\mathcal{E}_{n,t}$ is the energy received at the FC resulting from sensor $u_{n,t}$ and $\Gamma$ is some positive threshold \cite{Berg09}. For OOK transmission over a fading channel with an energy threshold detector at the FC, it may be reasonable to assume that $\epsilon_1 \gg \epsilon_0$, reflecting the belief that deep fades causing the received energy to not meet the specified threshold are more likely than noise induced fluctuations that push the decision metric over the threshold boundary.

The overall problem considered is structurally similar to the classical binary hypothesis testing problem over one bit channels. In other words, each sensor $S_n$, given a realization $x_{n,t}$ of $X_{n}$, computes a message $u_{n,t}\in \{0,1\}$ using its decision function $\gamma_n:\mathcal{X}\to \{0,1\}$, i.e.,
\begin{equation*}
\gamma_n(x_{n,t})=u_{n,t}\,,
\end{equation*}
and sends this message towards the FC. The FC, based on the aggregate received BAC channel outputs $\underline{y}_t\triangleq(y_{1,t},\ldots,y_{N,t})$, makes a decision $\widehat{h}_t\in\{0,1\}$ (with corresponding random variable $\widehat{H}_t$) about the present hypothesis at each time interval, using its decision function $\gamma_0:\{0,1\}^N\to \{0,1\}$, i.e.,
\begin{equation*}
\gamma_0(\underline{y}_t)=\widehat{h}_t\,.
\end{equation*}
Note that the FC does not directly have access to the sensor outputs $\underline{u}_t\triangleq(u_{1,t},\ldots,u_{N,t})$, while it has access to the corresponding BAC channel outputs $\underline{y}_t$.

The overall objective in this paper is to design the decision functions $\gamma_n$, for $n=0,\ldots,N$, of the FC and the sensors in such a way that some performance measure is optimized. This problem was considered extensively in the classical distributed detection literature \cite{Varsh96,Lon90,Alla14} when energy is always available at the sensors to send their messages and the sensor-to-FC channels are error-free channels. However, the problem of designing sensors' decision functions when each sensor is an energy harvesting device is largely open. We shall herein consider the problem of designing decision functions of the sensors when each sensor is an energy harvesting device. In what follows, we first define our performance metric and then, in Section~\ref{sec:EHS}, we define an energy harvesting device. 

Let $U_n$ be a random variable corresponding to output messages of sensor $S_n$. Then, due to the independence of observations, the conditional PMF associated with the message vector $\underline{U}\triangleq (U_1,\ldots,U_N)$ can be obtained according to
\begin{equation}\begin{split}
P_{\underline{U}\vert H_t}\left(\underline{u}\vert h_t \right)=\prod_{n=1}^N \int_{x\in \gamma^{-1}_n(u_n)}f_{X_n\vert H_t}\left(x\vert h_t \right)dx\,,
\end{split}\end{equation}
where $\underline{u}\triangleq\left(u_1,\ldots,u_N \right)$, and where $\gamma^{-1}_n(u_n)$ is the set of observations $x\in\mathcal{X}$ that satisfy $\gamma_n(x)=u_n$.

Let $\mathcal{B}_{\mathrm{Tot},t}$ be the total Bhattacharyya distance (BD) at the FC at time instance $t$, and for given sensor decision functions ${\gamma_1,\gamma_2,\ldots,\gamma_N}\,$:
\begin{equation*}
\mathcal{B}_{\mathrm{Tot},t}=-\log\sum_{\underline{y}_t\in\{0,1\}^N}\sqrt{P_{\underline{Y}\vert H_t}(\underline{y}_t\vert 0)P_{\underline{Y}\vert H_t}(\underline{y}_t\vert 1)}\,,
\end{equation*}
where $\underline{Y}\triangleq (Y_1,\ldots,Y_N)$ and $Y_n$ is a random variable corresponding to BAC channel output $y_{n,t}$. The Bayesian error probability $P_{\mathrm{E},t}\triangleq \Pr\left(\widehat{H}_t\neq H_t\right)$ is minimized when the FC applies the maximum a-posteriori (MAP) rule \cite{Tsi88} and the corresponding error probability can be upper bounded by the total BD according to \cite{Kai67}
\begin{equation*}
P_{\mathrm{E},t}\leq \sqrt{\pi_0\pi_1}\, e^{-\mathcal{B}_{\mathrm{Tot},t}}\,.
\end{equation*}
In the context of distributed hypothesis testing, it has been acknowledged that the Bayesian error probability criteria in most cases does not lead to a tractable design procedure for decision functions, even for small sized networks. This led authors to consider the Bhattacharyya distance as a performance measure of the network \cite{Lon90,Poor77,Alla15O}. In addition, when the observations at the sensors, conditioned on the true hypothesis, are independent, the total Bhattacharyya distance at the FC simplifies greatly and decouples \cite{Alla15O}, i.e., 
\begin{equation*}\begin{split}
\mathcal{B}_{\mathrm{Tot},t}&=\sum_{n=1}^N -\log\sum_{{y}_{n,t}=0,1}\sqrt{P_{{Y_n}\vert H_t}({y}_{n,t}\vert 0)P_{{Y_n}\vert H_t}({y}_{n,t}\vert 1)}\\
&=\sum_{n=1}^N\mathcal{B}_{n,t}\,.
\end{split}\end{equation*}
In other words, at each time instance $t$, the total Bhattacharyya distance at the FC of a parallel network is the summation of all the Bhattacharyya distances delivered from different sensors. Therefore, a network which maximizes the Bhattacharyya distance at the FC is a network with individually optimized sensors. Then, it suffices to maximize the BD received at the FC from each sensor, separately. From now on we will focus on a single energy harvesting sensor $S$ and drop the subscript $n$ and denote its BD at time $t$ by $\mathcal{B}_t$. Throughout this paper, we interchangeably use the terms ``Bhattacharyya distance" and ``BD".

\begin{figure}[t]
\begin{center}
\begin{tikzpicture}[align=center,scale=0.8,>=stealth]
\def\sensor#1#2{
\begin{scope}[shift={#1}, rotate=#2]
\node (Sn) at (0,0) [rectangle,inner sep=4mm,draw] {$S$};
\draw [->] (0,1.5) -- (Sn) node [near end, right]{$x_{t}$};
\draw [->] (Sn) -- (0,-2) node [near end, right]{$u_{t}$};
\draw (-1,0.5) -- (-1,-.5) -- (-1.5,-0.5) -- (-1.5,0.5); 
\draw [fill=gray] (-1,-0.1) -- (-1,-.5) -- (-1.5,-0.5) -- (-1.5,-0.1) -- (-1,-0.1); \node at (-1.8,-0.3) {$b_t$};
\draw (-1,-.5) -- (-1.5,-0.5) ;
\draw (-1,-.3) -- (-1.5,-0.3) ;
\draw (-1,-0.1) -- (-1.5,-0.1) ;
\draw (-1,.1) -- (-1.5,0.1) ;
\draw (-1,.3) -- (-1.5,0.3) ;
\draw [->] (-1.25,1.5) -- (-1.25,0.5) node [near end, left]{$e_t$};
\draw [->] (-1.25,-0.5) -- (-1.25,-1) -- (-0.4,-1) -- (-0.4,-0.7) node at (-0.4,-1.2) {$w_t$};
\end{scope}
}

\def\interval#1#2{
\begin{scope}[shift={#1}, rotate=#2]
\draw [->] (2,-.8) -- (2,1.5); 
\draw [->] (1.8,-0.5) -- (7.5,-0.5);
\draw (2.7,.75) -- (4,.75) -- (4,0.15) -- (5,0.15) -- (5,0.75) -- (6,0.75);
\draw (2.7,-0.6) -- (2.7,-0.4) node at (2.7,-0.85) {$t$};
\draw (6,-0.6) -- (6,-0.4) node at (6,-0.85) {$t+1$};
\draw [<->,dashed] (2.7,-.4) -- (2.7,0.74) node at (2.4,0.2) {$b_t$};
\draw [<->,dashed] (6,-.4) -- (6,0.74) node at (6.5,0.2) {$b_{t+1}$};
\draw [<->,dashed] (3.9,0.15) -- (3.9,0.74) node at (3.6,0.4) {$w_{t}$};
\draw [<->,dashed] (5.1,0.15) -- (5.1,0.74) node at (5.4,0.4) {$e_{t}$};
\draw [dashed] (1.8,1) -- (7,1) node at (1.5,1) {$K$};
\end{scope}
}

\sensor{(0,0)}{0}
\interval{(1,-0.5)}{0}

\end{tikzpicture}
\end{center}
\caption{Model of an energy harvesting node (left) and an example of a time interval $t$ (right).}
\label{fig:EHS}\end{figure}
\subsection{Energy Harvesting Sensors}\label{sec:EHS}
Consider an energy harvesting sensor $S$ as in Fig.~\ref{fig:EHS}. Let us assume that during time interval $t$ energy $e_t$ arrives stochastically at the node as a stationary and ergodic random process $E_t$, see \cite{Ulu15}. Let $b_t$ be the battery state and let $B_t$ be its corresponding random process, which is in general a correlated random process over time even when energy harvesting process $E_t$ is i.i.d. Note that the actions of the sensor affect the future of the battery state, and the sensor knows the battery state. Assume that the battery size (capacity) is $K$. Then the amount of available energy in the battery at transmission time $t+1$ is (cf. \cite{Sha10})
\begin{equation}b_{t+1}=\min\left\{b_{t}-w_t+e_t,\,K\right\},\label{eq:bt1}\end{equation}
where $w_t$ is the amount of energy used to send the message $u_t$ at time $t$, with the corresponding random variable $W_t$. We assume that the arrival energy $e_t$ during time interval $t$ can not be used at the same time interval (see Fig.~\ref{fig:EHS}). The same as in \cite{Valen16a,Miche15,Miche12a,Tutu14}, we assume energy arrives in packets and at each time interval the sensor is capable of harvesting at most one packet of energy. We also assume $e_t\in\{0,1\}$ is drawn from a Bernoulli distribution, with $\Pr(e_t=1)=p_{e}$. We further assume that only sending a message costs a packet of energy, and the energy of making the observation and processing is negligible. These assumptions, while rather simplistic, are repeatedly used in the literature (see \cite{Valen16a,Miche15,Miche12a,Tutu14} and references therein) for an energy harvesting system. We adopt this model for the sake of analytical tractability and insight. Thus, $w_t=1$ if $u_t=1$ was sent during time interval $t$, otherwise $w_t=0$, i.e.,
\begin{equation*}
w_{t} =  w_{t}(u_t) = \begin{cases}
           1 & u_{t}=1\,,\\
		0 & u_{t}=0 \,. \end{cases}
\end{equation*} 
As noted before, the received BD at the FC from sensor $S$ at time $t$ is
\begin{equation}
\mathcal{B}_t=-\log\left[ \sum_{y_t=0,1}\sqrt{P_{Y\vert H_t}({y_t}\vert 0)P_{Y\vert H_t}({y_t}\vert 1)} \right],
\label{eq:BEH}\end{equation}
where
\begin{equation}\begin{split}
P_{Y\vert H_t}({y_t=i}\vert h)=&P_{U\vert H_t}({u_t=i}\vert h)(1-\epsilon_i)+\\
&P_{U\vert H_t}({u_t=1-i}\vert h)\epsilon_{1-i}\,.
\label{eq:PY}\end{split}\end{equation}

We say a sensor decision function $\gamma$ is a likelihood-ratio quantizer (or likelihood-ratio threshold) if
\begin{equation}
u_t=\gamma(x_t)\triangleq  \left\{ \begin{array}{ll}
          1 & l(x_t)\geq \Theta,\\
		0 & \text{Otherwise}\,, \end{array}\right.
\label{eq:trtest} 
\end{equation} 
for a given $\Theta$, where \[l(x_t)\triangleq \ln\frac{f_{X\vert H_t}(x_t\vert 1)}{f_{X\vert H_t}(x_t\vert 0)}\] 
is the log-likelihood ratio for a given observation $x_t$. Let $l_t\triangleq l(x_t)$ and let $L\triangleq l(X)$. 
It was shown in \cite{Tsi93Ext} that, for the \emph{unconstrained} case\footnote{By \emph{unconstrained} we mean the situation where energy is always available at the sensor.}, and error-free sensor-to-FC channels, likelihood-ratio quantizers are optimal decision rules at the local sensors. Chen and Willett \cite{Biao05} then generalized this result for the case of non-ideal sensor-to-FC channels. They have shown that likelihood-ratio thresholds are optimal for non-ideal channels as well (see \cite{Biao05,Bin06} for the conditions under which this result holds). Furthermore, for scalar observations $x_t$, when the likelihood-ratio of the observation at a sensor $S$ is monotone and increasing in $x_t$, the likelihood-ratio threshold can be directly translated to the sensor observation. This is an immediate consequence of the Karlin-Rubin theorem \cite{Leh06}.

Throughout this paper, we denote the optimal threshold of an energy unconstrained sensor by ${\Theta}^\star_{\mathrm{u}}$. In other words, ${\Theta}^\star_{\mathrm{u}}$ is the threshold in \eqref{eq:trtest} which maximizes \eqref{eq:BEH}, i.e.,
\begin{equation}\label{eq:BConven}\begin{split}
{\Theta}^\star_{\mathrm{u}}=\arg\max_{\Theta}\bigg\{-\log\Big[&\sqrt{[\epsilon_0+\delta q_0][\epsilon_0+\delta q_1]}\,+\\
&\sqrt{[1-\epsilon_0-\delta q_0][1-\epsilon_0-\delta q_1]} \,\Big]\bigg\}
\end{split}\end{equation}
where $$q_{h}\triangleq \Pr\left( L\geq \Theta\vert H_t=h \right),$$
$$\delta\triangleq 1-\epsilon_0-\epsilon_1,$$for $h=0,1$. Note that $q_0$ and $q_1$ are the false alarm and the detection probability of an unconstrained sensor, respectively.  

We say that the observation model at the sensor is \textit{separable}, or perfect, if there exists a threshold $\Theta$ for which 
\begin{equation} 
q_0=0\,,\quad q_1=1\,.
\label{eq:sObs}\end{equation}
In this situation, for an unconstrained sensor and error-free channel, the BD will be infinity and the detection problem is trivial. However, in real-world applications such observation models do not exist and the observation models are \textit{non-separable}, i.e., there is no threshold $\Theta$ for which the conditions in \eqref{eq:sObs} are both satisfied. However, such a condition may hold asymptotically in the high SNR limit. Thus, let us define $\mathcal{E}$ as the signal-to-noise ratio (SNR) at the sensor. We say that the observation model at the sensor is \textit{asymptotically separable} if, when SNR goes to infinity, for a sensor with non-separable observations there exists a threshold ${\Theta}$ for which
\begin{equation} 
\lim_{\mathcal{E}\to \infty}q_0=0\,,\quad \lim_{\mathcal{E}\to \infty}q_1=1\,,
\label{eq:px01}\end{equation}
and therefore the corresponding BD for the unconstrained sensor and error-free channel goes to infinity, $\lim_{\mathcal{E}\to \infty}\mathcal{B}_t=\infty$. Many observation models that are considered in the literature are asymptotically separable and in the following we will introduce one of these observation models. Consider the case where each observation is from a Rayleigh distribution with scale parameter $\sigma_0$ or from a Rician distribution with scale parameter $\sigma_1$ and noncentrality parameter $s$.  
The conditional distributions at the sensor are therefore
\begin{equation}\begin{split}
f_{X\vert H_t}\left( x\vert 0 \right)&=\frac{x}{\sigma^2_0}\exp\left({-\frac{x^2}{2\sigma^2_0}}\right), \text{ and}\\
f_{X\vert H_t}\left( x\vert 1\right)&=\frac{x}{\sigma^2_1}\exp\left({-\frac{x^2+s^2}{2\sigma^2_1}}\right)I_0\left(\frac{xs}{\sigma^2_1}\right),
\label{eq:dist01}\end{split}\end{equation}
where $I_0(z)$ is the modified Bessel function of the first kind with order zero. This is a relevant observation model for low complexity sensors in a wireless sensor network used to detect the presence of a known signal in Gaussian noise based on the received power. For simplicity, we assume $\sigma_0=\sigma_1=1$ and by definition $\mathcal{E}\triangleq  s^2$. Now, letting SNR go to infinity, there is a threshold $\Theta$ under which the conditions in \eqref{eq:px01} are satisfied, and therefore the observation model is asymptotically separable.

We shall herein assume that the energy constrained sensor $S$ is also a single-threshold likelihood-ratio quantizer that applies the following threshold test:
\begin{equation}
u_t=\gamma(x_t,b_t)\triangleq  \left\{ \begin{array}{ll}
          1 & l_t\geq \Theta,\, b_t>0,\\
		0 & \text{Otherwise}. \end{array}\right.
\label{eq:trtest2} 
\end{equation}
In other words, a sensor $S$ at each time $t$ compares the likelihood-ratio $l_t$ of its observation $x_t$ with a threshold $\Theta$. If $l_t\geq \Theta$ and the battery is not empty $b_t>0$, it sends a message towards the FC, otherwise it remains silent.\footnote{Note that the condition $b_t>0$ is a consequence of $w_t=1$ assumption.}  We would like to note that, the FC in this setup is a static device, in the sense that, it makes decision at each time $t$ only based on its input messages at the same time, and it does not use previous input messages from the sensors. In other words, the FC does not take into account any correlation in the received messages from a sensor that could (possibly) be introduced by the memory of the battery at that sensor. In what follows we will find a threshold $\Theta^\star$ which maximizes the delivered Bhattacharyya distance from a sensor to the FC. We will show that for the problem at hand an optimum decision rule not only depends on the observation model, it also depends on the battery charge state, arrival energy features and the capacity of the battery ${K}$. We further show that the resulting thresholds in general differ from those of the unconstrained case.

In the following section, for different battery capacities, we study the depletion probability of an energy harvesting sensor. Furthermore, we will show how one can design an energy harvesting sensor for different battery capacities. Note that by \emph{designing energy harvesting sensors} we mean the selection of the decision threshold $\Theta$ in \eqref{eq:trtest2}.

%%%%%%%%%%%%%%%%%%%%%%%%%%%%%%%%
%%%%%%%%%%%%%%%%%%%%%%%%%%%%%%%%
\section{Design of Energy Harvesting Sensors}\label{sec:design}
To formulate the problem, let us first find conditional mass probabilities resulting from a sensor decision $u_t\in\{0,1\}$.
\begin{equation}\begin{split}
P_{U\vert H_t}(u_t=1\vert h)&=\Pr\left( L\geq \Theta \cap B_t>0\vert H_t=h  \right)\\
&=\Pr\left( L\geq \Theta\vert H_t=h \right)\Pr\left( B_t>0 \right)\\
&=q_{h}\left[1-\Pr\left( B_t=0 \right)\right]
\label{eq:Pu1}\end{split}\end{equation}
From \eqref{eq:BEH}, \eqref{eq:PY}, and \eqref{eq:Pu1} we observe that the BD of an energy harvesting sensor at time $t$ depends on the observation distribution at the sensor through $q_{h}$ and the battery depletion probability $\Pr\left( B_{t}=0 \right)$. Under the assumptions that the energy harvesting probability is i.i.d.\ Bernuoulli over time and space, and has a probability $p_e$, and the observations at the sensor are i.i.d.\ in time, we say that, the battery has a Markovian behavior in the sense that, conditioned on $E_t$ and $W_t$, its state at time $t+1$ (i.e., $B_{t+1}$) only depends on its state at time $t$ (i.e., $B_{t}$) and not the sequence of previous states, $\{B_{t'}\}_{t'=0}^{t-1}$. Under the Markovian assumption, a steady state probability for the battery charge state can be derived, which allows us to consider the long term performance of an energy harvesting sensor equipped with a battery. Let the state probability vector of the battery charge at time $t$ be 
\[{\bf{p}}_t\triangleq \big(\Pr\left(B_t=0\right),\ldots,\Pr\left(B_t=K\right) \big)^\intercal,\] 
where the superscript $^\intercal$ indicates transposition, and let us define the transition probability matrix as
\[{\bf P}=\big[p_{i,j} \big]_{\mathcal{K}\times \mathcal{K}}\,,\]
where $\mathcal{K}=K+1$ and $p_{i,j}\triangleq \Pr(B_{t}=j\vert B_{t-1}=i)\,,$
for $i,j=0,\ldots,K$. When the battery is in steady state \cite{Gro08}, the state probability vector will satisfy
\begin{equation}\label{eq:ssbb}
{\bf{p}}_{\infty}={\bf{P}^\intercal}\,{\bf{p}}_{\infty}\,,
\end{equation}
and the steady state probability of each battery state, i.e., \[p_{j}\triangleq \lim_{t\to \infty}\Pr(B_t=j),\] can be found using \eqref{eq:ssbb} and the fact that the summation of the probabilities must equal unity.

In steady state, after dropping the subscript $t$, the conditional probabilities in \eqref{eq:Pu1} are given by
\begin{equation}\begin{split}
P_{U\vert H}(u=1\vert h)&=q_h(1-p_{0})\,,\\
P_{U\vert H}(u=0\vert h)&=1-q_h(1-p_{0})\,.
\label{eq:Pu10}\end{split}\end{equation}
We can plug \eqref{eq:Pu10} into \eqref{eq:PY} and find the resulting Bhattacharyya distance in \eqref{eq:BEH} as
\begin{equation}\begin{split}
\mathcal{B}=-\log\Big[&\sqrt{[\epsilon_0+\delta q_0(1-p_0)][\epsilon_0+\delta q_1(1-p_0)]}\,+\\
&\sqrt{[1-\epsilon_0-\delta q_0(1-p_0)][1-\epsilon_0-\delta q_1(1-p_0)]} \,\Big].
\end{split}\label{eq:BDE}\end{equation}
We observe that the resulting BD for the energy constrained case (in steady state) depends on the depletion probability $p_0$ of the battery, which itself is a function of energy features and the battery capacity. Therefore, in the following we study the performance of an energy harvesting sensor for different battery capacities, $K$. To this end, for an arbitrary battery capacity $K$, in \textit{Lemma \ref{th:p0}}, we will find the depletion probability $p_0$. Then we can find a threshold $\Theta^\star$ which maximizes \eqref{eq:BDE}.

\begin{figure*}[!t]
\begin{minipage}[b]{0.45\linewidth}
\includegraphics[scale=0.77]{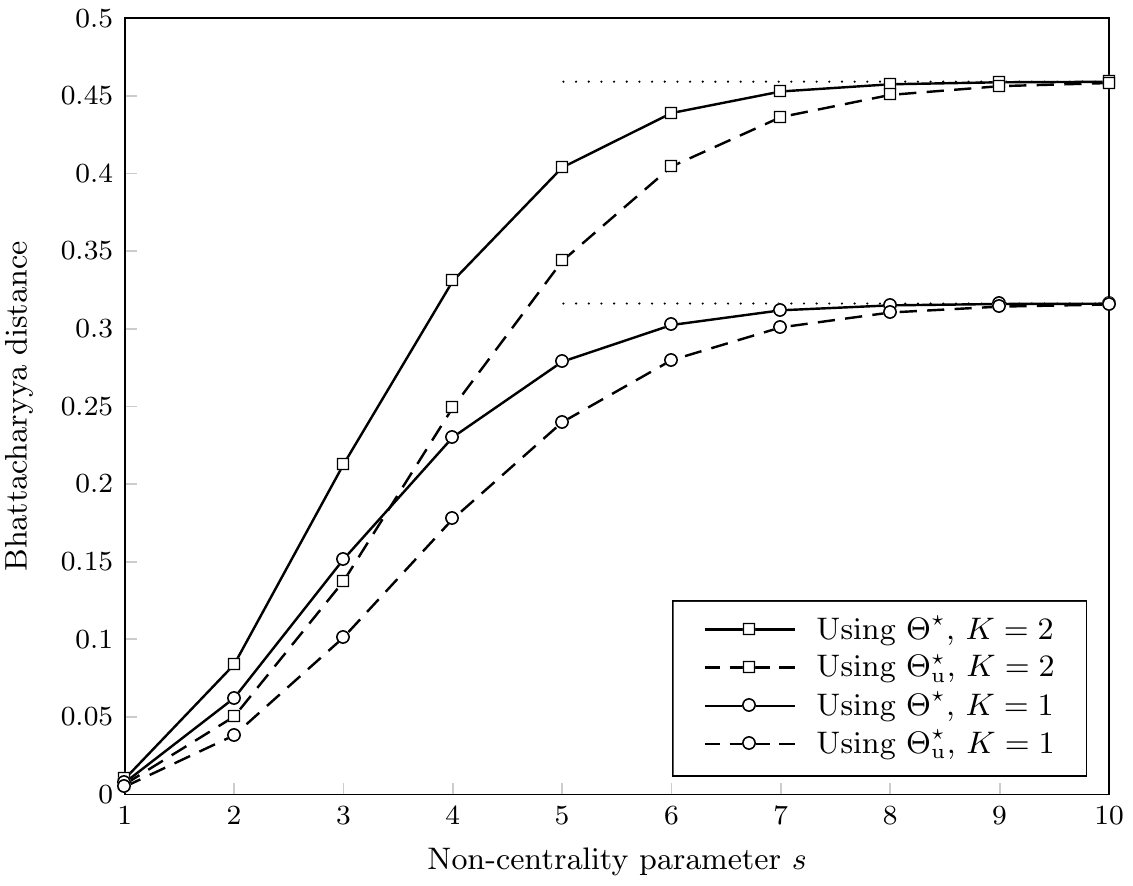}
\end{minipage}
\hspace{0.5cm}
\begin{minipage}[b]{0.45\linewidth}
\includegraphics[scale=0.77]{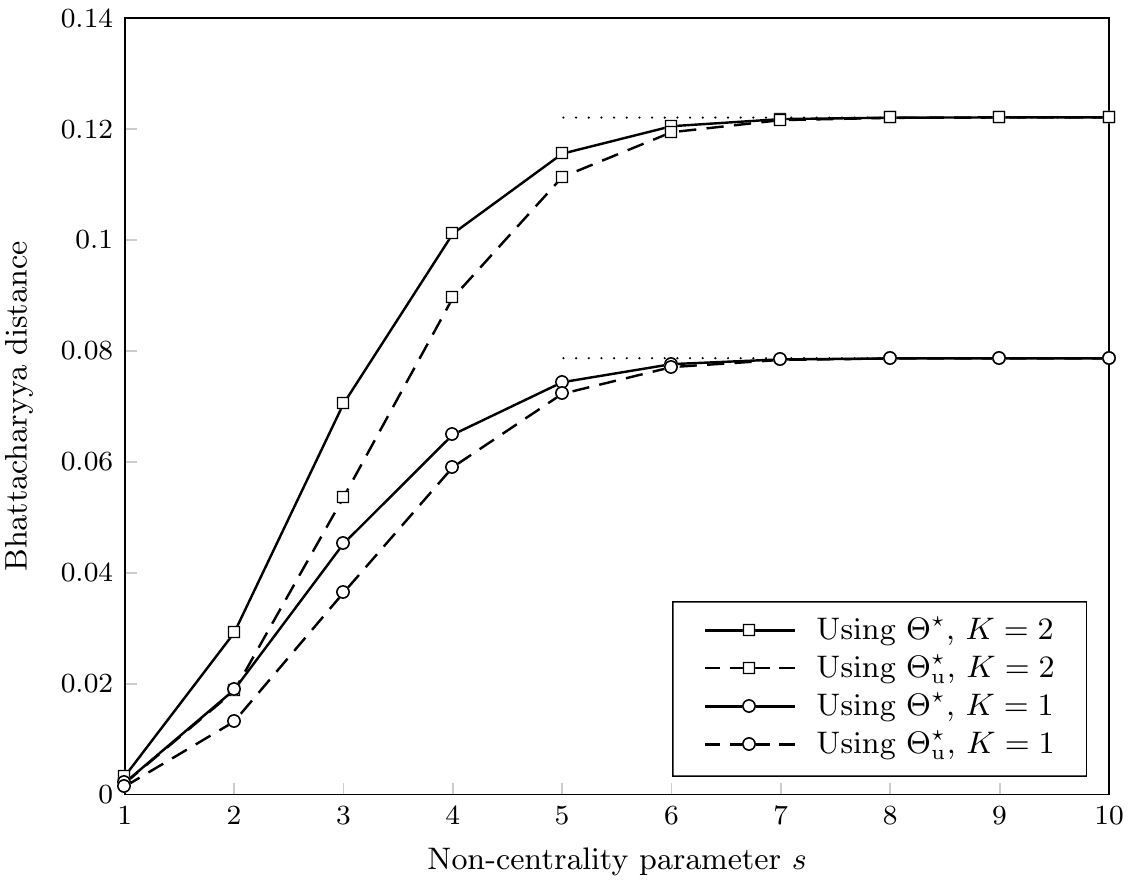}
\end{minipage}
\caption{Bhattacharyya distance of single- and double-slot-battery ($K=1,2$) energy harvesting sensors for different threshold tests and for noiseless (left) and noisy (right) communication channels, for $\pi_1=0.2$, $p_{e}=0.15$, as a function of non-centrality parameter.}
\label{fig:B12}
\end{figure*}

To this end, consider a $K$-slot-battery energy harvesting sensor $S$. Assume that at transmission time $t$, its battery charge is in state $b_t=k$, where $0< k< K$. Since during each time interval the sensor is capable of harvesting and consuming at most one packet of energy, its state at transmission time $t+1$ will be either $b_{t+1}=k-1, k$ or $k+1$. Two exceptions to this rule are when the battery charge is in state $b_t=0$ or in state $b_{t}=K$. In the former case the state of the battery at transmission time $t+1$ will be either zero or one, since the battery charge can not be negative. In the latter case, the state of the battery at transmission time $t+1$ will be either $K-1$ or $K$, since there is no space to save more energy packets. 
\begin{lemma}\label{th:p0}
The depletion probability of a $K$-slot-battery energy harvesting sensor is given by\footnote{We would like to note that the difference between this model and an M/M/1/K queue is that in this model each battery charge state $k$ after each transition can either change by one, or remain unchanged.}
\begin{subequations}\label{eq:pd}
\begin{align}
p_{0}&=\left[{1+\frac{1}{1-q}\sum_{k=1}^K \Omega^k}\right]^{-1}\\
&=\frac{p_{e}-q}{p_{e} \Omega^K-q}\,,
\end{align}\end{subequations}
where $$\Omega\triangleq \frac{p_{e}(1-q)}{q(1-p_{e})}\,,$$
and $q\triangleq \pi_0 q_0+\pi_1 q_1$.
\end{lemma}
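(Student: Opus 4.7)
\begin{proofs}[Proof plan]
The plan is to model the battery charge as a discrete-time, finite-state birth-death Markov chain on $\{0,1,\ldots,K\}$ and solve for its stationary distribution via detailed balance. First I would enumerate the one-step transition probabilities implied by \eqref{eq:bt1} together with the decision rule \eqref{eq:trtest2}: since transmission occurs (conditionally on the battery being nonempty) with probability $q=\pi_0 q_0+\pi_1 q_1$, independently of the harvesting event which has probability $p_e$, the chain can only move by $\pm 1$ or stay put. At an interior state $0<k<K$ the up-rate is $(1-q)p_e$ and the down-rate is $q(1-p_e)$; at state $0$ the up-rate is just $p_e$ (no transmission is possible); and at state $K$ the truncation in \eqref{eq:bt1} only affects the self-loop probability, not the down-rate.

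Second, since this is a nearest-neighbour chain on a line with a unique stationary distribution, I would invoke detailed balance to get
\begin{equation*}
p_{k}\,(1-q)p_e = p_{k+1}\,q(1-p_e), \qquad 1\le k\le K-1,
\end{equation*}
together with the boundary relation $p_0\,p_e = p_1\, q(1-p_e)$. The interior recursion yields $p_{k+1}=\Omega\, p_k$ for $k\ge 1$ with $\Omega=\tfrac{p_e(1-q)}{q(1-p_e)}$, while the boundary gives $p_1=\tfrac{\Omega}{1-q}\,p_0$. Combining, one obtains $p_k=\tfrac{\Omega^k}{1-q}\,p_0$ for $k\ge 1$.

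Third, I would impose the normalization $\sum_{k=0}^{K}p_k=1$, which gives directly the first form
\begin{equation*}
p_0=\left[1+\frac{1}{1-q}\sum_{k=1}^{K}\Omega^k\right]^{-1}.
\end{equation*}
Finally I would close the geometric sum: using $\sum_{k=1}^{K}\Omega^k = \Omega(\Omega^K-1)/(\Omega-1)$ and the identity $\Omega-1 = (p_e-q)/(q(1-p_e))$, the bracket collapses to $(p_e\Omega^K-q)/(p_e-q)$, giving the second form in \eqref{eq:pd}.

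The only real subtlety is the irregular boundary at state $0$: the absence of a down-rate there breaks the uniform recursion $p_{k+1}=\Omega p_k$ that one might expect across the whole chain, producing the extra $1/(1-q)$ factor in $p_1$. I would make sure to separate the $k=0$ balance from the generic $k\ge 1$ balance to get this factor right; the remaining manipulations are routine geometric-series algebra. (The case $\Omega=1$, i.e.\ $p_e=q$, is covered by continuity or equivalently by noting the first form of \eqref{eq:pd} is valid regardless.)
\end{proofs}
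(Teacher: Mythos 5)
Your proposal is correct, and while the modeling layer is identical to the paper's --- the same Markov chain on $\{0,\ldots,K\}$, the same transition probabilities (including the special handling of the boundary states $0$ and $K$), and the same final normalization-plus-geometric-series step --- the way you extract the stationary distribution is genuinely different. The paper writes out the global balance equations from \eqref{eq:ssbb} as in \eqref{eq:pkk}, rearranges them into the second-order homogeneous difference equation $p_k=-\Omega\,p_{k-2}+(1+\Omega)\,p_{k-1}$ for $3\le k\le K$, and solves it via the characteristic polynomial $\rho^2-(1+\Omega)\rho+\Omega=0$ with roots $\{1,\Omega\}$, fitting $\lambda_1=0$ and $\lambda_2=p_0/(1-q)$ from the first two balance relations. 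You instead invoke reversibility of the birth--death chain to obtain the first-order detailed-balance relations $p_k(1-q)p_e=p_{k+1}q(1-p_e)$ for $1\le k\le K-1$ together with $p_0\,p_e=p_1\,q(1-p_e)$ at the boundary; both routes land on exactly the paper's intermediate identity $p_k=\Omega^k p_0/(1-q)$, and indeed your boundary relation reproduces the paper's $p_1=\tfrac{p_e}{q(1-p_e)}p_0$ since $\tfrac{\Omega}{1-q}=\tfrac{p_e}{q(1-p_e)}$, so the ``extra $1/(1-q)$ factor'' you flag is precisely the subtlety the paper encodes through its two auxiliary conditions. Your route is shorter and slightly more robust: detailed balance holds for any nearest-neighbour chain (a one-line cut argument --- zero net stationary flow across each edge --- would make this airtight rather than asserted), and, as you note, it is uniform in the degenerate case $\Omega=1$ (i.e., $p_e=q$), where the paper's ansatz $p_k=\lambda_1+\lambda_2\Omega^k$ tacitly assumes distinct characteristic roots; the paper's conclusion still survives there (the fitted solution satisfies the forward recursion and the two auxiliary conditions, which determine it uniquely), but your derivation needs no such caveat. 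What the paper's difference-equation machinery buys in exchange is generality: it would continue to apply if the battery state could jump by more than one unit per slot (e.g., batched energy arrivals or multi-packet transmissions), where the chain is no longer a reversible birth--death process and detailed balance fails.
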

\begin{proof}
See Appendix \ref{app:p0}.
\end{proof}

\textit{Remark:} The depletion probability of a $K$-slot-battery energy harvesting sensor (when $p_e<1$) is always above zero for any $K<\infty$. However, when $K=\infty$ it admits the following expression: 
\begin{equation}
p_{0}=\left \{
  \begin{tabular}{cc}
 $0$ & $p_{e}\geq q$\,,\\
 $1-\frac{p_e}{q}$ & Otherwise\,.
  \end{tabular}
\right.
\label{eq:pb0i}\end{equation}
This can be seen by noting that, the summation $$\sum_{k=1}^\infty\Omega^k=\sum_{k=1}^\infty \left(\frac{p_{e}(1-q)}{q(1-p_{e})}\right)^k$$ converges if $p_{e}< q$. Otherwise the sum diverges, which makes the depletion probability equal to zero. This depletion probability expression for an infinite capacity battery sensor is in line with the expression found before by modeling the battery state as a birth-death process in \cite{Alla16-EUSIPCO}.

Note that the condition $p_{e} \geq q$, which results in zero depletion probability, follows intuition in the sense that, when the probability of energy arrival $p_{e}$ is higher than the probability of energy consumption (or the probability $q$ that the sensor observation is above the threshold and it decides to send a message), the battery will accumulate energy and, with probability one, in long term not be empty. 
In this situation the problem will be the same as the unconstrained setup.

Now by plugging the expression for the depletion probability in \eqref{eq:pd} into that of the BD for energy constrained sensor in \eqref{eq:BDE}, we can find a closed-form expression for the BD of a $K$-slot-battery sensor as follows.
\begin{equation}\begin{split}
\mathcal{B}&=\!-\!\log\!\Bigg[\!\sqrt{\left[\epsilon_0+\delta q_0\frac{p_e(\Omega^K-1)}{p_e\Omega^K-q}\right]\!\!\left[\epsilon_0+\delta q_1\frac{p_e(\Omega^K-1)}{p_e\Omega^K-q}\right]}\,+\\
&\sqrt{\left[1-\epsilon_0-\delta q_0\frac{p_e(\Omega^K-1)}{p_e\Omega^K-q}\right]\!\!\left[1-\epsilon_0-\delta q_1\frac{p_e(\Omega^K-1)}{p_e\Omega^K-q}\right]} \,\Bigg].
\label{eq:BDE2}\end{split}\end{equation}
Note that since the BD is a function of observation models, we can not simplify it more. In what follows we will be considering different battery capacities $K$, and numerically find the threshold $\Theta^{\star}$ which maximizes the corresponding Bhattacharyya distance.

%%%%%%%%%%%%%%%%%%%%%%%%
When a sensor is capable of saving only one packet of energy, incoming energy $e_t$ is saved in the battery if the battery is empty. Otherwise, the sensor discards the incoming energy packet. For a single-slot-battery sensor an optimal threshold $\Theta^\star$ is found by maximizing the BD in \eqref{eq:BDE2} for $K=1$. To do this, we use a grid search to find the threshold $\Theta$ that maximizes the BD. When a sensor is capable of saving two packets of energy ($K=2$), incoming energy $e_t$ is saved in the battery if the battery is empty or has only one packet of energy in the buffer. Otherwise, the sensor discards the incoming energy packet. The same as for a single-slot-battery, by maximizing \eqref{eq:BDE2} we can find an optimal threshold $\Theta^\star$ for a double-slot-battery sensor $K=2$. Fig.~\ref{fig:B12} illustrates the resulting BD of single- and double-slot-battery energy harvesting sensors using the adapted ($\Theta^\star$) and the unconstrained (${\Theta}^\star_{\mathrm{u}}$) thresholds, for $\pi_1=0.2$ and $p_e=0.15$. The figure on the left, considers error-free communication channels ($\epsilon_0=\epsilon_1=0$) while the figure on the right considers BAC channels with parameters $\epsilon_0=0.1$ and $\epsilon_1=0.2$.

We observe that the unconstrained threshold $\Theta_{\mathrm{u}}^{\star}$ is always sub-optimal. In the single- and double-slot-battery cases the Bhattacharyya distance is bounded when the SNR goes to infinity. In \textit{Theorem \ref{lm:BK}} we will find a closed-form expression for this upper bound for an arbitrary battery size $K$, and find conditions under which the Bhattacharyya distance is upper bounded.

\begin{theorem}\label{lm:BK}
Consider a $K$-slot-battery energy harvesting sensor $S$. Assume that the probability of harvesting energy at each time interval is $p_{e}$, and the a-priori probability of hypothesis $H_t=1$ is $\pi_1$ and the sensor-to-FC channels are BAC channels as in Fig.~\ref{fig:BAC}. The BD of this sensor at the input of the FC can not exceed 
\begin{equation}\label{eq:BK}
\begin{split}
\overline{\mathcal{B}}\triangleq-\log\Big[\sqrt{\epsilon_0\left(1-\epsilon_1-\overline{p}_0\delta \right)}\,+\sqrt{(1-\epsilon_0)\left(\epsilon_1+\overline{p}_0\delta \right)}\Big],
\end{split}\end{equation}
where
\[\overline{p}_0=\left[{1+\frac{1}{1-\pi_1}\sum_{k=1}^K \left(\frac{p_{e}(1-\pi_1)}{\pi_1(1-p_{e})}\right)^k}\right]^{-1}.\]
\end{theorem}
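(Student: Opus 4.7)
The plan is to identify $\overline{\mathcal{B}}$ as the supremum of $\mathcal{B}$ in \eqref{eq:BDE2} over admissible pairs $(q_0,q_1)$ with $0\le q_0\le q_1\le 1$, and then evaluate that supremum in closed form. First, I would re-parametrize the BD in terms of the induced output probabilities $a_h \triangleq q_h(1-p_0) = P_{U\vert H}(u=1\vert h)$, giving $\mathcal{B} = -\log\bigl[\sqrt{(\epsilon_0+\delta a_0)(\epsilon_0+\delta a_1)} + \sqrt{(1-\epsilon_0-\delta a_0)(1-\epsilon_0-\delta a_1)}\bigr]$. A short calculation of the partial derivatives of this Bhattacharyya coefficient between two Bernoulli distributions (under the affine reparametrization $b_h = \epsilon_0 + \delta a_h$) shows that, for $a_0 \le a_1$, $\mathcal{B}$ is non-increasing in $a_0$ and non-decreasing in $a_1$. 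Hence the supremum is pushed toward $a_0 = 0$ and $a_1$ as large as possible.

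Second, taking $q_0 = 0$ enforces $a_0 = 0$, so what remains is to maximize $a_1(q_1) = q_1(1 - p_0(q_1))$ on $[0,1]$ with $q = \pi_1 q_1$ inside Lemma~\ref{th:p0}. I would show this is attained at $q_1 = 1$. Increasing $q_1$ raises the transmit rate linearly and, through $p_0$, also increases depletion; that the first effect dominates can be verified either by differentiating the closed-form $1-p_0 = p_e(\Omega^K - 1)/(p_e\Omega^K - \pi_1 q_1)$ in $q_1$, or more cleanly by a steady-state flow-balance argument on the battery Markov chain, which gives $q(1-p_0) = p_e(1-p_K)$ with $p_K$ the stationary probability that the battery is full, so that under $q_0 = 0$, $a_1 = p_e(1-p_K)/\pi_1$; then one argues $p_K$ is non-increasing in $q_1$, so $a_1$ is non-decreasing on $[0,1]$ and attains its maximum at $q_1 = 1$.

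Third, substituting $q_0 = 0$ and $q_1 = 1$ back into \eqref{eq:BDE2} is routine: $q = \pi_1$, so $\Omega = p_e(1-\pi_1)/[\pi_1(1-p_e)]$ and $p_0$ specializes to the $\overline{p}_0$ in the theorem statement. The four factors under the radicals then collapse: those containing $q_0$ become $\epsilon_0$ and $1-\epsilon_0$, while those containing $q_1$ become $\epsilon_0 + \delta(1-\overline{p}_0) = 1-\epsilon_1-\delta\overline{p}_0$ and $1-\epsilon_0-\delta(1-\overline{p}_0) = \epsilon_1+\delta\overline{p}_0$ using $\delta = 1-\epsilon_0-\epsilon_1$. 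Plugging these back in gives exactly \eqref{eq:BK}.

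The main obstacle is the monotonicity step, namely showing rigorously that $a_1(q_1)$ is non-decreasing on $[0,1]$ for every battery capacity $K$. The two competing effects interact through both $\Omega(q_1)$ and $1-\pi_1 q_1$ in the Lemma~\ref{th:p0} formula, so a raw derivative computation is algebraically opaque; I therefore expect the flow-balance reformulation $a_1 = p_e(1-p_K)/\pi_1$, combined with a monotone-coupling argument on the battery chain that makes $p_K$ non-increasing in $q_1$, to be the cleanest path.
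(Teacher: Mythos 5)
Your skeleton is the same as the paper's proof: the paper also (i) uses monotonicity of the Bhattacharyya coefficient to replace $q_0$ by $0$ (its function $\Gamma_1$ with $\alpha=q_0$, $\beta=q_1$, $\xi=\epsilon_0$, $\zeta=(1-p_0)\delta$), (ii) observes that the resulting bound is decreasing in $\Sigma=q_1(1-p_0)$, so that it remains to maximize $\Sigma$ over $0\le q_0\le q_1\le 1$, with $q_0=0$ optimal there as well because $p_0$ grows with $q$, and (iii) substitutes $q_0=0$, $q_1=1$, using $\epsilon_0+\delta(1-\overline{p}_0)=1-\epsilon_1-\delta\overline{p}_0$ and $1-\epsilon_0-\delta(1-\overline{p}_0)=\epsilon_1+\delta\overline{p}_0$, exactly as in your third step. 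Where you genuinely deviate is the linchpin monotonicity of $\Lambda(q_1)=q_1(1-p_0)$, and there your preferred instrument is flawed: the flow-balance identity $q(1-p_0)=p_e(1-p_K)$ is false for this battery model. Under the dynamics $b_{t+1}=\min\{b_t-w_t+e_t,K\}$, an arriving packet is wasted only when the battery is full \emph{and} no transmission occurs in the same slot (a transmission at a full battery frees room for the concurrent arrival), so the correct balance reads $q(1-p_0)=p_e\bigl(1-(1-q)p_K\bigr)$. You can check this against Lemma~\ref{th:p0}: with $p_k=\Omega^k p_0/(1-q)$ from \eqref{eq:pk}, both sides equal $p_e q(\Omega^K-1)/(p_e\Omega^K-q)$, whereas your right-hand side is smaller by $p_e q\, p_K>0$; already for $K=1$, where $p_0=q(1-p_e)/\bigl(q(1-p_e)+p_e\bigr)$, your identity is off by a factor $(1-p_e)$.

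The error is repairable without abandoning your strategy: with $q_0=0$ and $q=\pi_1 q_1$ the corrected identity gives $a_1=\Lambda=p_e\bigl(1-(1-q)p_K\bigr)/\pi_1$, and since both $1-q$ and $p_K$ are non-negative and non-increasing in $q_1$ (the latter by exactly the monotone coupling you sketch --- which, note, you would still need to carry out, including the boundary states $b=0$ and $b=K$), the product $(1-q)p_K$ is non-increasing, so $\Lambda$ is non-decreasing and $q_1=1$ is optimal. For comparison, the derivative route you dismissed as ``algebraically opaque'' is what the paper actually does, and it turns out to be short: substituting $\lambda=p_e(1-q_1\pi_1)/\bigl((1-p_e)q_1\pi_1\bigr)$ collapses $\Lambda$ to $(p_e/\pi_1)(1-\lambda^K)/\bigl(1-p_e\lambda^K-(1-p_e)\lambda^{K+1}\bigr)$; then $d\lambda/dq_1\le 0$, and the sign of $d\Lambda/d\lambda$ reduces to $g(\lambda)=(K+1)\lambda-K-\lambda^{K+1}\le 0$, which holds because $g$ attains its maximum $g(1)=0$ at $\lambda=1$. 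In short: same architecture as the paper and correct endgame algebra, but the one step requiring real work rests, as written, on a wrong identity plus an unproven coupling; fix the balance equation as above (or fall back on the explicit derivative) and your proof is complete.
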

\begin{proof}
See Appendix \ref{app:BK}.
\end{proof}

\textit{Remark:} The upper bound in \eqref{eq:BK} is achievable under any separable observation model by selecting $\Theta^\star$ so that $q_0=0$ and $q_1=1$. It is also asymptotically achievable (when $\mathcal{E}\to\infty$) for a sensor with an asymptotically separable observation model. These asymptotes are also shown in Fig.~\ref{fig:B12} by dotted lines.
Note that a, possibly unexpected, insight due to this result is that when the sensors make (asymptotically) separable observations, i.e., when the sensors are sure of the true hypothesis, it is optimal for the sensor to act greedily and always transmit whenever $H_t=1$ and their internal battery allows for a transmission. The FC will in this situation still only receive a transmission from a subset of the sensors, as the battery depletion events are independent across sensors due to the independence of the energy arrival process.

\textit{Remark:} The upper bound in \eqref{eq:BK} is an increasing function of the battery capacity $K$. It is in line with the intuition that the performance of an energy harvesting sensor $S$ is improved by increasing its battery capacity. The upper bound in \eqref{eq:BK} is also an increasing function of the probability of harvesting energy $p_e$ and is a decreasing function of the a-priori probability $\pi_1$.

These also follow the intuition in the sense that by increasing the probability of having energy available at the battery, the performance of the sensor is improved. While $p_e$ affects the amount of available energy directly, $\pi_1$ affects the battery content in a more complicated way: According to \eqref{eq:trtest2} an optimally designed sensor aims to send a message ``1'' and consume a packet of energy when its observation is above $\Theta^\star$. By increasing the a-priori probability of the hypothesis $H_t=1$, it will be more likely that the sensor aims to send a ``1'' and consume energy. This itself increases the depletion probability $p_0$ and so decreases the performance of the sensor.

We would also like to note that, unlike previous works on the design of sensor decision functions in a distributed detection network using the Bhattacharyya distance as performance metric, our problem formulation comprises the affect of a-priori probabilities $\pi_j$, for $j=0,1$ through $q$ in the depletion probability $p_0$.

\textit{Remark:} For any finite $K$, the BD never grows unboundedly with $\mathcal{E}\to\infty$, while for the unconstrained case we have seen that it can grow unboundedly for separable and asymptotically separable observation models for error-free channel by increasing the SNR.

Using \eqref{eq:BDE2} and \eqref{eq:BK}, one can analyze the BD performance of a $K$-slot-battery sensor (when $K<\infty$) and its asymptote. In Fig.~\ref{fig:Bbmax} the optimum BD for an energy harvesting sensor as a function of the sensor battery capacity is shown when the observation model at the sensor is according to \eqref{eq:dist01} with $s=5$. We observe from this figure that the maximum BD and the upper bound (as discussed before) are both increasing (non-decreasing) functions of sensor battery capacity $K$.

\textit{Remark:} For an infinite battery capacity sensor $K=\infty$, under some conditions the BD distance grows unboundedly for an asymptotically separable model, as SNR increases. Concretely, only if $p_{e}\geq \pi_1$ and communication channels are noiseless, $\epsilon_0=\epsilon_1=0$, the Bhattacharyya distance of an optimal threshold test \eqref{eq:trtest2} increases unboundedly for separable or asymptotically separable observation models, as SNR increases.

In Fig.~\ref{fig:Bi} the Bhattacharyya distance for an infinite-slot-battery sensor is shown, when using different thresholds, and for different setups. We observe from this figure that as the non-centrality parameter (or SNR) increases, for the case where $p_e\geq \pi_1$ and $\epsilon_0=\epsilon_1=0$ the Bhattacharyya distance increases unboundedly, otherwise it is upper bounded with the asymptote (shown by the dotted line) found in \eqref{eq:BK}.

In the following section, we compare the error probability performance of networks of energy harvesting sensors which we have designed using conventional (unconstrained) formulation in \eqref{eq:BConven} (for $\Theta_{\mathrm{u}}^{\star}$) and using our proposed formulation in \eqref{eq:BDE2} (for $\Theta^{\star}$). Before considering the error probability performance of the designed networks, consider again the choice of $\Theta_{\mathrm{u}}^{\star}$ and $\Theta^{\star}$. While using $\Theta_{\mathrm{u}}^{\star}$, the sensor makes an optimal decision (in the sense of the Bhattacharyya distance) for the case where required energy for transmission of positive message is always available at the sensor. However, when energy is not always available at the sensor, the sensor should act more conservatively in the sense that: The sensor should remain silent and preserve energy for future time slots, unless it receives observation about the presence of hypothesis $1$ with high reliability. In our formulation for the Bhattacharyya distance, the threshold $\Theta^{\star}$ determines if a received observation has high reliability about the presence of hypothesis $1$. According to our simulation results, we always obtain $\Theta^{\star}\geq \Theta_{\mathrm{u}}^{\star}$ which further confirms our discussion above.

\begin{figure}[t]
\centering
\includegraphics[scale=0.8]{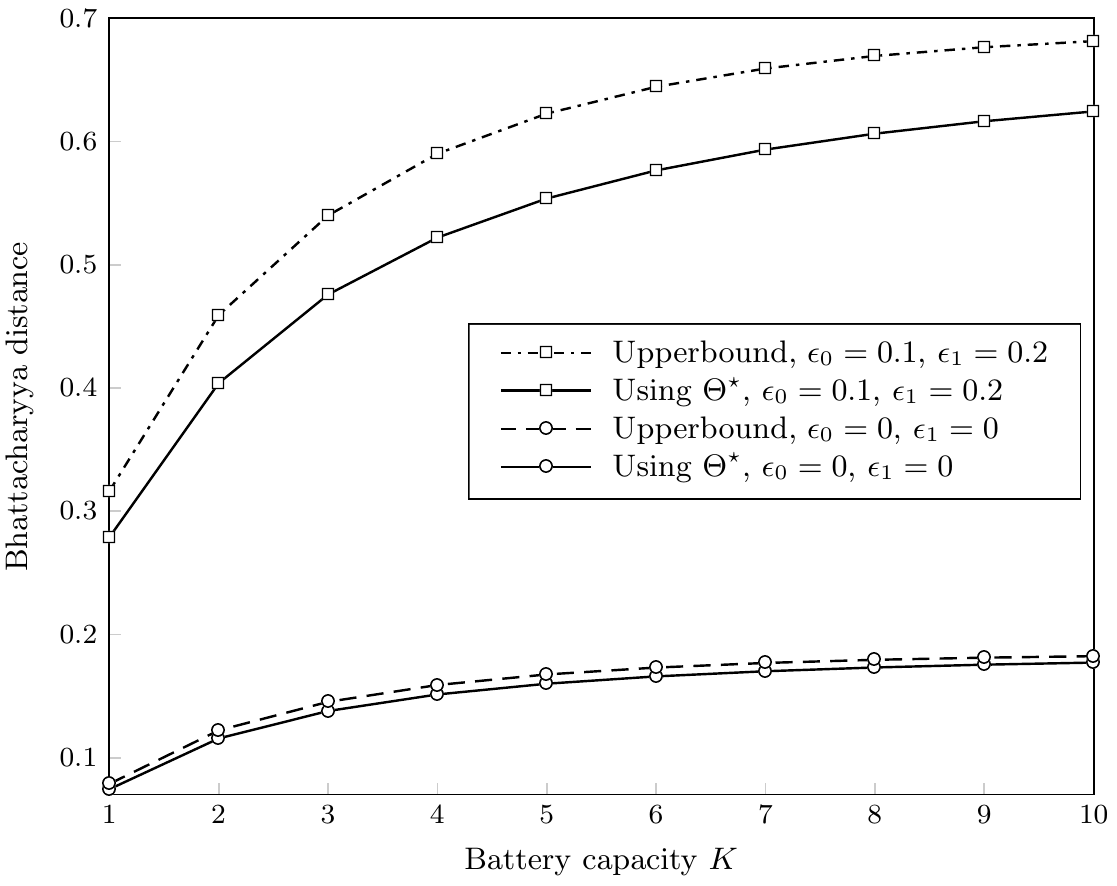}
\caption{Bhattacharyya distance of an energy harvesting sensor as a function of battery capacity $K$, for noisy and noiseless communication channels and for $\pi_1=0.2$, $p_e=0.15$, and corresponding upper bounds according to \textit{Theorem} \ref{lm:BK}.}
\label{fig:Bbmax}
\end{figure}
%%%%%%%%%%%%%%%%%%%%%%%%%%%%%%%%
%%%%%%%%%%%%%%%%%%%%%%%%%%%%%%%%.
\section{Error Probability Performance of Networks}\label{sec:results}
In this section we illustrate the benefit of our results by numerical examples. Consider a sensor network with $N=4$ energy harvesting sensors. Suppose the sensors make observations from the same phenomenon $H_t$ during each time interval $t$, and send an OOK message to the FC. Let us assume again that sending a positive message consumes a packet of energy and a negative message is conveyed through a non-transmission with no energy cost. Let the observation model at each sensor be as in \eqref{eq:dist01}, and conditioned on the true hypothesis, the observations be independent, and that the sensors use the threshold test in \eqref{eq:trtest2}. Note that, though in this section we assume the same observation model at the sensors, our results and conclusions drawn through this work are generalized to \emph{non-identical} observation models at the sensors (and respectively non-identical sensor) case.

Using our results in the previous section, we design sensor decision rules $\Theta$ for different battery sizes, and compare their error probability performance with those of the unconstrained case. The expected error probability, when the FC uses the MAP criterion, at time $t$ is found using \cite{Alla14}
\begin{equation*}
P_{\mathrm{E},t}=1-\sum_{\underline{y}_t}\max_{j=0,1}\left\{ \pi_{j}P_{\underline{Y}\vert H}\left(\underline{y}_t\vert j\right) \right\},
\end{equation*}
which can be numerically computed, without the need for Monte-Carlo simulations.

\begin{figure}[t]
\centering
\includegraphics[scale=0.8]{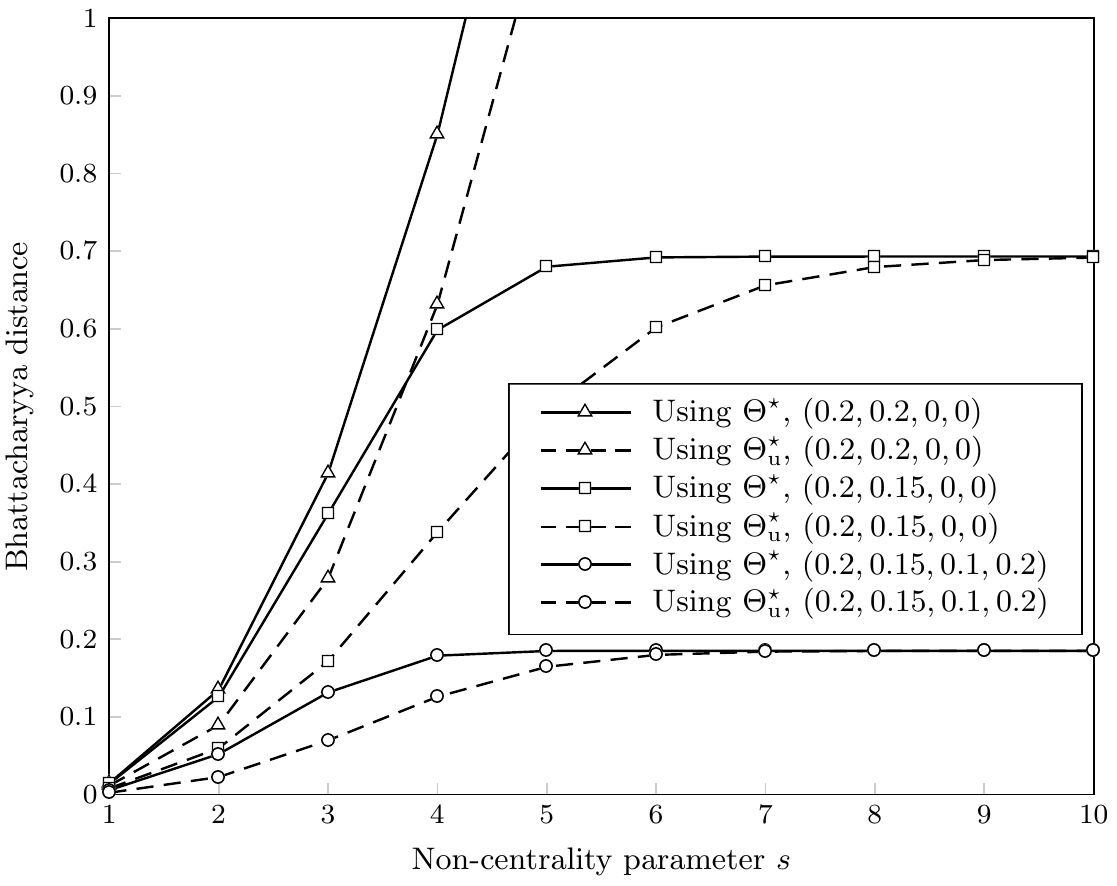}
\caption{Bhattacharyya distance of an infinite-slot-battery ($K=\infty$) energy harvesting sensor for different threshold tests and for different $(\pi_1,p_{e},\epsilon_0,\epsilon_1)$, as a function of noncentrality parameter.}
\label{fig:Bi}
\end{figure}

Fig.~\ref{fig:Pe12s} shows the error probability performance of designed sensor networks, with single-slot-battery sensors and double-slot-battery sensors, when $\pi_1=0.2$ and $p_{e}=0.15$. In both cases ($K=1,2$) the adapted threshold $\Theta^{\star}$ leads to better performance than the unconstrained threshold $\Theta^{\star}_{\mathrm{u}}$. This is in line with our results based on the Bhattacharyya distance: Adapted threshold leads to a higher BD. As $s\to\infty$, we also observe that the error probability does not converge to zero (lower bounded). It was shown for the Bhattacharyya distance that $\mathcal{B}_t$ is upper bounded for single and double-slot-battery sensors. We also observe from the figure that, while both using the optimal threshold or the sub-optimal threshold, increasing the battery capacity (here from one to two) can improve the error probability performance of a network of sensors.

In Fig.~\ref{fig:Pi1} the error probability performance of the network of $N=4$ infinite-slot-battery sensors is shown for different sets of $(\pi_1,p_e,\epsilon_0,\epsilon_1)$. When $\pi_1>p_e$ and/or sensor-to-FC channels are erroneous, the error probability of the network converges to a fixed value as $s\to\infty$, while when $\pi_1\leq p_e$ and the channels are error-free ($\epsilon_0=\epsilon_1=0$), the error probabilities for both the adapted threshold and the energy unconstrained threshold rapidly go to zero. These observations are also in line with our results in terms of the Bhattacharyya distance: When $K=\infty$, if $p_e\geq \pi_1$ and channels are error-free, the BD increases unboundedly, otherwise it converges to a non-zero asymptote. Note that we have the same observations for other choices of $(\pi_1,p_{e},\epsilon_0,\epsilon_1)$.

\begin{figure}[t]
\centering
\includegraphics[scale=0.8]{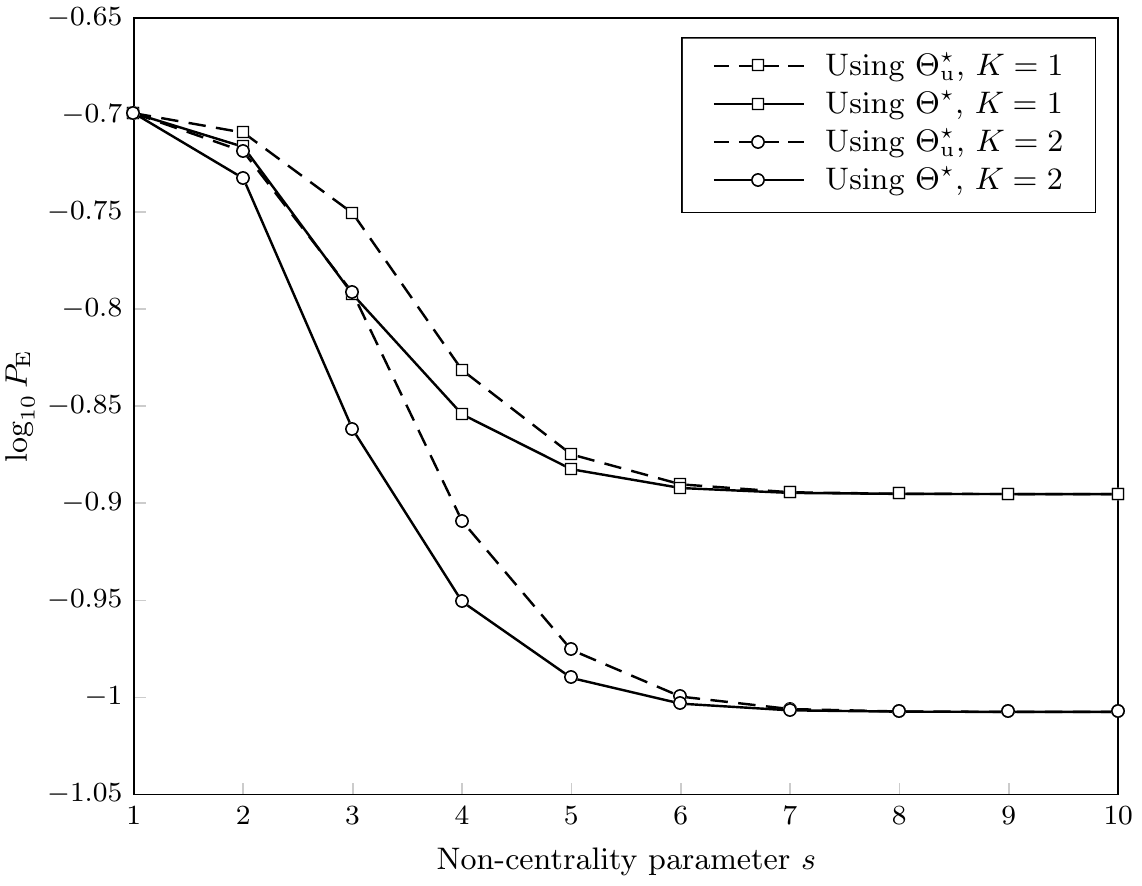}
\caption{Error probability performance of networks with $N=4$ energy harvesting sensors with different battery capacities, when $\pi_1=0.2$ and $p_{e}=0.15$, and noisy communication channels with $\epsilon_0=0.1$ and $\epsilon_1=0.2$.}
\label{fig:Pe12s}
\end{figure}

%%%%%%%%%%%%%%%%%%%%%%%%%%%%%%%%
%%%%%%%%%%%%%%%%%%%%%%%%%%%%%%%%
\section{Concluding Remarks}\label{sec:conclude}
In this paper we studied the problem of decentralized hypothesis testing in a network of energy harvesting sensors. The sensors in the network make observations of a phenomenon and harvest all the energy they need from the environment. We consider the case where the sensors have different battery capacities to save the harvested energy. Considering the Bhattacharyya distance as a performance metric, we formulated the problem of designing sensors in the network by considering the constraints imposed by energy harvesting and proposed a method to design the sensors decision rules. We further studied the performance of sensors for different battery capacities and presented conditions under which the Bhattacharyya distance is upper bounded, and therefore the error probability is lower bounded and does not converge to zero. 

In this paper, we considered the case where the observations and the energy charging processes at the sensors are independent. A possible extension to this work can be to consider the case where the observations at the sensors, or/and the energy charging processes are correlated. For the sake of analytical tractability and insight, we only studied the case where each sensor decision rule is a simple single-threshold test. It is known that having multiple-thresholds at the sensors can result in better BD performance. Concretely, battery state dependent thresholds $\Theta(b_t)$ can improve the performance of a single-threshold energy harvesting sensor, and an extension to our study can be to study the performance of such threshold tests. Although not presented herein, we have numerically found optimal threshold tests $\Theta_1^{\star},\Theta_2^{\star}$, where $\Theta_i^{\star}$ is the optimal threshold when $b_t=i$ for $i=1,2$, using a grid search in the case of a double-slot-battery ($K=2$) and the observation model in \eqref{eq:dist01}.
Our results confirm that having multiple thresholds can improve the BD performance of the sensor, but the gains are small compared to introducing the single threshold in the first place.

We would also like to note that other members of Ali-Silvey distance (like J-divergence) can also be used as a performance metric for the design of decision rules at the sensors, with a lot of similar derivations. In this paper, we chose to use the BD as the performance metric since according to \cite{Lon90} the BD is one of the most analytically tractable members of Ali-Silvey distances and was reported as the most efficient metric among those studied in \cite{Poor77}.

\begin{figure}[t]
\centering
\includegraphics[scale=0.8]{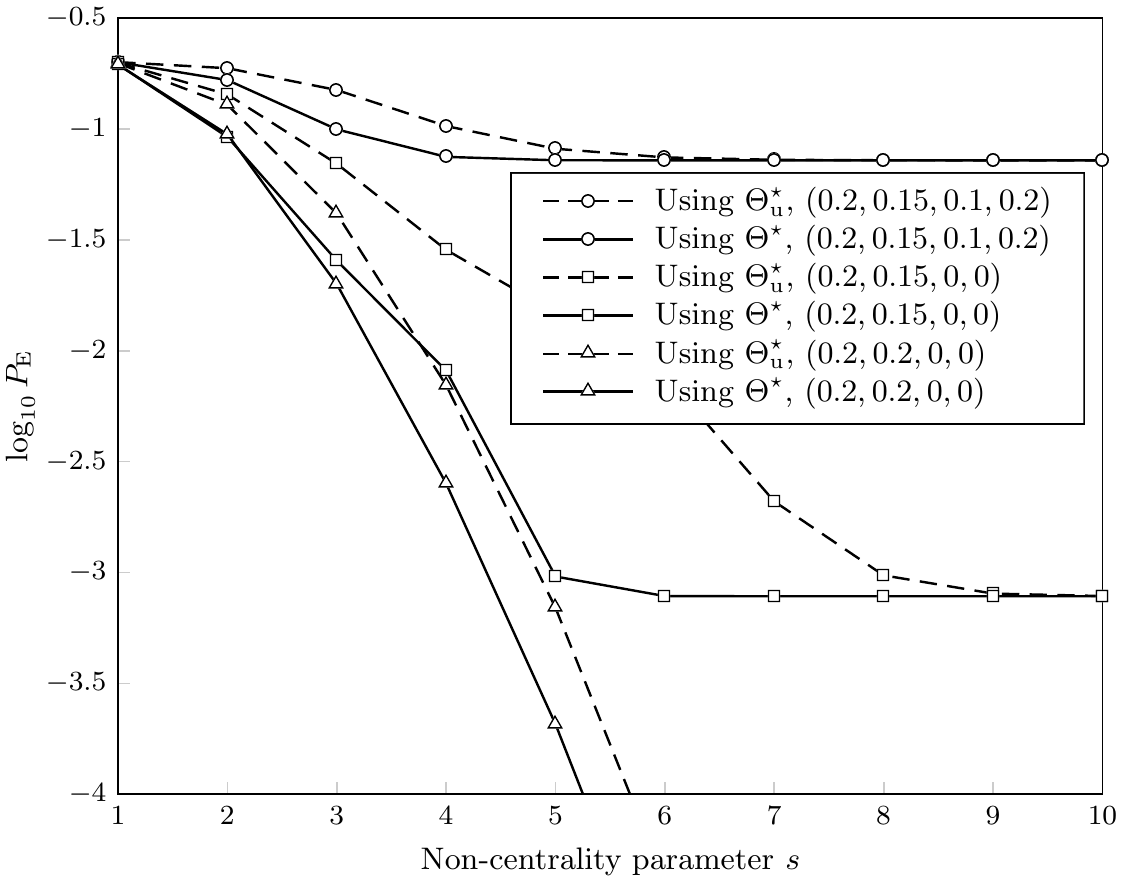}
\caption{Error probability performance of networks with infinite-slot-battery ($K=\infty$) energy harvesting sensors, and for different $(\pi_1,p_{e},\epsilon_0,\epsilon_1)$.}
\label{fig:Pi1}
\end{figure}

Our work represents a first attempt to introduce energy harvesting considerations in the context of distributed detection using established (but arguably simple) design metrics such as the Bhattacharyya distance. Our approach furthermore utilizes simplified models such as the energy arrival model taken from \cite{Valen16a,Miche15,Miche12a,Tutu14}. While the simplifying assumptions make the results more tractable and interpretable, they also naturally come with some strong limitations. Apart from the extension to a battery-dependent decision threshold, one could also consider the case where the sensors optimize the OOK transmission energy in order to influence the error probability of the transmission, i.e., the cross-over probability $\epsilon_0$ in Fig.~\ref{fig:BAC}, based on some given physical channel model. This would however arguably be more relevant in conjunction with a more refined energy arrival and storage model, which we cannot presently handle directly given that we (for simplicity) assume that energy arrives in quanta matched to the energy needed for a positive (on) transmission. Furthermore, although the decoupling of the sensor design caused by considering the Bhattacharyya distance is often used to simplify the sensor design problem in the distributed detection literature, this approach should be scrutinized again when adding the energy harvesting aspect. This is due to the fact that jointly designed sensor decision rules can potentially lead to tangible benefits through more advanced joint energy conservation rules across sensors than what is implicitly provided by the differences in the random battery state across the sensors.

%%%%%%%%%%%%%%%%%%%%%%%%%%%%%%%%%%%%%%
\section*{Acknowledgment}
The authors would like to thank the Associate Editor and anonymous reviewers for their valuable comments and suggestions that led to improvement of this paper. The authors would especially like to thank the reviewers who prompted us to consider non-perfect communication channels.

%%%%%%%%%%%%%%%%%%%%%%%%%%%%%%%%
%%%%%%%%%%%%%%%%%%%%%%%%%%%%%%%%
\appendices
\section{Proof of Lemma \ref{th:p0}}\label{app:p0}
Using \eqref{eq:ssbb}, state probabilities for $p_0,\ldots,p_{K-1}$ are found as
\begin{equation}\begin{split}
p_{0}&= p_{0,0} p_{0}+p_{1,0} p_{1},\\
p_{{k}}&=p_{k-1,k} p_{{k-1}}+p_{k,k} p_{{k}}+p_{k+1,k} p_{{k+1}},\quad 1\leq k<K.
\label{eq:pkk}\end{split}\end{equation}
Using \eqref{eq:bt1} and \eqref{eq:trtest2}, for the transition probabilities we obtain
%\begin{equation*}\begin{split}
\[p_{0,0}=1-p_{e}\,,\quad p_{1,0}=q(1-p_{e})\,,\quad p_{0,1}=p_{e}\,,\]
%\end{split}\end{equation*}
and for $1\leq k< K$
\begin{equation}\begin{split}
p_{k,k+1}&=(1-q)p_{e}\,,\\
p_{k+1,k}&=q(1-p_{e})\,,\\
p_{k,k}&=qp_{e}+(1-q)(1-p_{e})\,.
\label{eq:pij}\end{split}\end{equation}
By re-arranging the equations in \eqref{eq:pkk} and replacing the transition probabilities $p_{i,j}$ with those in \eqref{eq:pij} we get the following equations 
%\begin{equation}\begin{split}
\begin{subequations}\label{eq:diff}\begin{align}
p_{1}&=\frac{p_{e}}{q(1-p_{e})}\,p_{0} \label{eq:diffa}\\ 
p_{2}&=\frac{p_{e}^2(1-q)}{q^2(1-p_{e})^2}\,p_{0} \label{eq:diffb}\\
p_k&=-\Omega \,p_{k-2}+(1+\Omega)\,p_{k-1}\,,\quad 3 \leq k \leq K.
\end{align}\end{subequations}
%\label{eq:diff}\end{split}\end{equation}
The system of equations described by \eqref{eq:diff} is a homogeneous difference equation \cite{din10}. Its characteristic polynomial is 
$$\rho^2-(1+\Omega)\rho+\Omega=0\,,$$
whose roots are $\rho=\{1,\Omega\}$. These lead to the general solution for the difference equation as 
\begin{equation}p_k=\lambda_1+\lambda_2\Omega^k.\label{eq:charsol}\end{equation}
Now, applying auxiliary conditions \eqref{eq:diffa} and \eqref{eq:diffb}
%$$p_{1}=\frac{p_{e}}{q(1-p_{e})}\,p_{0}\,, \quad p_{2}=\frac{p_{e}^2(1-q)}{q^2(1-p_{e})^2}\,p_{0}\,,$$
to \eqref{eq:charsol}, we obtain $\lambda_1=0$ and $\lambda_2=\frac{1}{1-q}p_0$. Thus the solution to the difference equation is found to be given by 
\begin{equation}p_{k}=\frac{\Omega^k}{1-q}\,p_{0}, \quad 1\leq k \leq K\,,\label{eq:pk}\end{equation}
which describes each state probability $p_k$ in terms of the depletion probability $p_0$. Now, using the fact that the summation of the probabilities must equal unity, for a $K$-slot-battery we obtain the desired depletion probability expression \eqref{eq:pd}.

\section{Proof of Theorem \ref{lm:BK}}\label{app:BK}
To prove the theorem, first consider the following function $\Gamma_1$ where $0\leq \alpha\leq \beta\leq 1$, $0\leq \zeta\leq 1$, and $0\leq \xi\leq 0.5$. 
\begin{equation*}
%\min_{\substack{\alpha,\\{0\leq \alpha\leq\beta\leq 1 }}}
\Gamma_1\triangleq\sqrt{[\xi+\zeta\alpha][\xi+\zeta\beta]}+\sqrt{[1-\xi-\zeta\alpha][1-\xi-\zeta\beta]}
\label{eq:proofBK2}\end{equation*}
Taking the derivative of $\Gamma_1$ with respect to $\alpha$, we observe that when $0\leq \alpha\leq\beta$, $\Gamma_1$ is an increasing function of $\alpha$ and its minima is attained when $\alpha=0$. In other words
\begin{equation*}
\min_{\substack{\alpha}}\Gamma_1=\sqrt{\xi[\xi+\zeta\beta]}+\sqrt{[1-\xi][1-\xi-\zeta\beta]}\,,
\label{eq:proofBK3}\end{equation*}
or equivalently
\begin{equation}\begin{split}
%\min_{\substack{\alpha,\\{0\leq \alpha\leq\beta\leq 1 }}}
\sqrt{[\xi+\zeta\alpha][\xi+\zeta\beta]}+\sqrt{[1-\xi-\zeta\alpha][1-\xi-\zeta\beta]}\geq\\
\sqrt{\xi[\xi+\zeta\beta]}+\sqrt{[1-\xi][1-\xi-\zeta\beta]}\,
\label{eq:proofBK22}\end{split}\end{equation}
for $0\leq \alpha\leq \beta\leq 1$, $0\leq \zeta\leq 1$, and $0\leq \xi\leq 0.5$.

Using \eqref{eq:proofBK22}, we can conclude that for any given $q_1$, $\epsilon_0$, and $p_0$, the BD in \eqref{eq:BDE} is upperbounded by 
\begin{equation}\begin{split}
\mathcal{B}\leq-\log \Big[&\sqrt{\epsilon_0[\epsilon_0+q_1(1-p_0)\delta]}+\\
&\sqrt{[1-\epsilon_0][1-\epsilon_0-q_1(1-p_0)\delta]}\Big],
\label{eq:proofBK1}\end{split}\end{equation}
by replacing $\alpha$, $\beta$, $\xi$, and $\zeta$ in \eqref{eq:proofBK22} by $q_0$, $q_1$, $\epsilon_0$, and $(1-p_0)\delta$, respectively.

In what follows, we will find another upper bound for $\mathcal{B}$ by maximizing the right-hand-side of \eqref{eq:proofBK1} which is equivalent to minimizing $\Gamma_2$.
\begin{equation}\begin{split}
\Gamma_2\triangleq&\sqrt{\epsilon_0[\epsilon_0+q_1(1-p_0)\delta]}+\\
&\sqrt{[1-\epsilon_0][1-\epsilon_0-q_1(1-p_0)\delta]}
\end{split}\end{equation}
Minimizing $\Gamma_2$ is equivalent to maximizing $\Sigma\triangleq q_1(1-p_{0})$, since $\Gamma_2$ is a decreasing function of $\Sigma$. Then by replacing the depletion probability $p_0$ from \eqref{eq:pd} we obtain 
\begin{equation}\begin{split}
\Sigma=q_1\left(1-\left[{1+\sum_{k=1}^K \frac{p_{e}^k(1-q_0\pi_0-q_1\pi_1)^{k-1}}{(1-p_{e})^k(q_0\pi_0+q_1\pi_1)^k}}\right]^{-1}\right).
\label{eq:proofBK44}\end{split}\end{equation}

Now, considering the optimization problem
%Now, to complete the proof we will show that the right-hand-side in \eqref{eq:proofBK1} is maximized when $q_0=0$ and $q_1=1$. To do so, we will find $q_0$ and $q_1$ under which the upperbound in \eqref{eq:proofBK1} is maximized. Maximizing the RHS in \eqref{eq:proofBK1} is equivalent to maximizing $$q_1(1-p_{0})\,.$$ By replacing the depletion probability $p_0$ from \eqref{eq:pd}, and $\alpha$ and $\beta$ according to the definition above the objective function will be as follows.
\begin{equation}
\max_{\substack{q_0,q_1\\0\leq q_0\leq q_1\leq 1 }}
\Sigma\,,
\label{eq:proofBK4}\end{equation}
the objective function $\Sigma$ is a decreasing function of $q_0$ (note that $q_0$ only appears in the argument of the summation) and its maxima is attained when $q_0=0$, and therefore the problem in \eqref{eq:proofBK4} reduces to the following problem,
\begin{equation}
\max_{\substack{0\leq q_1\leq 1 }}
\Lambda\,,
\label{eq:proofBK5}\end{equation}
where,
\begin{equation}\begin{split}
\Lambda&\triangleq \Sigma\,\Big\vert_{q_0=0}\\
&=q_1\left(1-\left[{1+\sum_{k=1}^K \frac{p_{e}^k(1-q_1\pi_1)^{k-1}}{(1-p_{e})^k(q_1\pi_1)^k}}\right]^{-1}\right).
\label{eq:proofBK6}\end{split}\end{equation}

In what follows we will show that the objective function $\Lambda$ is an increasing function of $q_1$ and therefore its maxima is attained when $q_1=1$, which proves the theorem. %Let us first define 
%\begin{equation}\begin{split}
%\Lambda(q_1)\triangleq q_1\left(1-\left[{1+\sum_{k=1}^K \frac{p_{e}^k(1-q_1\pi_1)^{k-1}}{(1-p_{e})^k(q_1\pi_1)^k}}\right]^{-1}\right),
%\label{eq:proofBK6}\end{split}\end{equation}
Our aim is now to show $$\frac{d\Lambda}{dq_1}\geq0,$$when $0\leq q_1\leq1$.  
By evaluating the summation in \eqref{eq:proofBK6} and introducing 
\begin{equation*}\begin{split}
\lambda\triangleq\frac{p_{e}(1-q_1\pi_1)}{(1-p_{e})q_1\pi_1}\,,
\label{eq:proofBK7}\end{split}\end{equation*}
we obtain 
\begin{equation*}\begin{split}
\Lambda=\left( \frac{p_e}{\pi_1} \right) \frac{1-\lambda^K}{1-p_e\lambda^K-(1-p_e)\lambda^{K+1}}\,.
\label{eq:proofBK8}\end{split}\end{equation*}
According to the chain rule we have $$\frac{d\Lambda}{dq_1}=\frac{d\Lambda}{d\lambda}\,\frac{d\lambda}{dq_1},$$ and using first principles we can straightforwardly show $$\frac{d\lambda}{dq_1}\leq 0,$$for $0\leq q_1\leq1$. To complete the proof, we need to show 
\begin{equation}
\frac{d\Lambda}{d\lambda}\leq 0\,,
\label{eq:proofBK9}\end{equation}
for $\lambda\geq0$. To this end, note that
$$
\frac{d\Lambda}{d\lambda} = \left( \frac{p_e}{\pi_1} \right) \frac{\lambda^{K-1}\big((K+1)\lambda-K-\lambda^{K+1}\big)(1-p)}{\big((1-p)\lambda^{K+1}+p\lambda^K-1\big)^2} \, .
$$
Thus, proving \eqref{eq:proofBK9} is equivalent to proving that
\begin{equation*}
g(\lambda)\triangleq (K+1)\lambda-K-\lambda^{K+1}\leq 0\,.
\label{eq:proofBK10}\end{equation*}
To prove this, we first show $g(\lambda)$ (for any $K\geq 0$) has its optima at $\lambda=1$ (where $g\left(1\right)=0$), by setting its first derivative $\frac{dg}{d\lambda}$ equal to zero, i.e., $$\frac{dg}{d\lambda}=(K+1)-(K+1)\lambda^K=(K+1)(1-\lambda^K)=0\,.$$
Then, we show this optima is a maxima (i.e., $\max_{\lambda}g(\lambda)=g(1)=0$) by finding its second derivative, i.e.,
$$\frac{d^2g}{d\lambda^2}=-K(K+1)\lambda^{K-1}\leq 0\,.$$
This completes the proof of $\frac{d\Lambda}{d\lambda}\leq 0$ and therefore the proof of Theorem \ref{lm:BK}.

%%%%%%%%%%%%%%%%%%%%%%%%%%%%%%%%
%%%%%%%%%%%%%%%%%%%%%%%%%%%%%%%%
\bibliographystyle{IEEEtran}
\bibliography{Ref}

%%%%%%%%%%%%%%%%%%%%%%%%%%%%%%%%%%%%%%
%\input{bioAlla}
%\input{bioJames}
%\input{bioJoakim}

\end{document}